  \providecommand\BibTeX{{%
    \normalfont B\kern-0.5em{\scshape i\kern-0.25em b}\kern-0.8em\TeX}}}
\newcommand{\defn}[1]       {{\textit{\textbf{\boldmath #1}}}}
\newtheorem{claim}{Claim}
\begin{document}

%%
%% The "title" command has an optional parameter,
%% allowing the author to define a "short title" to be used in page headers.
\title{Bamboo Trimming Revisited: Simple Algorithms Can Do Well Too}

%%
%% The "author" command and its associated commands are used to define
%% the authors and their affiliations.
%% Of note is the shared affiliation of the first two authors, and the
%% "authornote" and "authornotemark" commands
%% used to denote shared contribution to the research.
\author{John Kuszmaul}
\affiliation{%
\institution{Yale University}
\city{New Haven}
\state{Connecticut}
\country{USA}
}
\email{john.kuszmaul@gmail.com}

\begin{abstract}
The bamboo trimming problem considers $n$ bamboo with growth rates $h_1, h_2, \ldots, h_n$ satisfying $\sum_i h_i = 1$. During a given unit of time, each bamboo grows by $h_i$, and then the bamboo-trimming algorithm gets to trim one of the bamboo back down to height zero. The goal is to minimize the height of the tallest bamboo, also known as the backlog. The bamboo trimming problem is closely related to many scheduling problems, and can be viewed as a variation of the widely-studied fixed-rate cup game, but with constant-factor resource augmentation. 

Past work has given sophisticated pinwheel algorithms that achieve the optimal backlog of 2 in the bamboo trimming problem. It remained an open question, however, whether there exists a \emph{simple} algorithm with the same guarantee---recent work has devoted considerable theoretical and experimental effort to answering this question. Two algorithms, in particular, have appeared as natural candidates: the \textbf{Reduce-Max} algorithm (which always cuts the tallest bamboo) and the \textbf{Reduce-Fastest}$(x)$ algorithm (which cuts the fastest-growing bamboo out of those that have at least some height $x$). It is conjectured that \textbf{Reduce-Max} and \textbf{Reduce-Fastest}$(1)$ both achieve backlog 2.

This paper improves the bounds for both \textbf{Reduce-Fastest} and \textbf{Reduce-Max}. Among other results, we show that the \emph{exact optimal} backlog for \textbf{Reduce-Fastest}$(x)$ is $x + 1$ for all $x \ge 2$ (this proves a conjecture of D’Emidio, Di Stefano, and Navarra in the case of $x = 2$), and we show that \textbf{Reduce-Fastest}$(1)$ \emph{does not} achieve backlog 2 (this disproves a conjecture of D’Emidio, Di Stefano, and Navarra). 

Finally, we show that there is a different algorithm, which we call the \textbf{Deadline-Driven} Strategy, that is both very simple and achieves the optimal backlog of 2. This resolves the question as to whether there exists a simple worst-case optimal algorithm for the bamboo trimming problem.  
\end{abstract}

%%
%% The code below is generated by the tool at http://dl.acm.org/ccs.cfm.
%% Please copy and paste the code instead of the example below.
%%
% \begin{CCSXML}
% <ccs2012>
%  <concept>
%   <concept_id>10010520.10010553.10010562</concept_id>
%   <concept_desc>Computer systems organization~Embedded systems</concept_desc>
%   <concept_significance>500</concept_significance>
%  </concept>
%  <concept>
%   <concept_id>10010520.10010575.10010755</concept_id>
%   <concept_desc>Computer systems organization~Redundancy</concept_desc>
%   <concept_significance>300</concept_significance>
%  </concept>
%  <concept>
%   <concept_id>10010520.10010553.10010554</concept_id>
%   <concept_desc>Computer systems organization~Robotics</concept_desc>
%   <concept_significance>100</concept_significance>
%  </concept>
%  <concept>
%   <concept_id>10003033.10003083.10003095</concept_id>
%   <concept_desc>Networks~Network reliability</concept_desc>
%   <concept_significance>100</concept_significance>
%  </concept>
% </ccs2012>
% \end{CCSXML}

%% \ccsdesc[500]{Computer systems organization~Embedded systems}
%% \ccsdesc[300]{Computer systems organization~Redundancy}
%% \ccsdesc{Computer systems organization~Robotics}
%% \ccsdesc[100]{Networks~Network reliability}

%% %%
%% %% Keywords. The author(s) should pick words that accurately describe
%% %% the work being presented. Separate the keywords with commas.
%% \keywords{datasets, neural networks, gaze detection, text tagging}

\keywords{cup emptying; bamboo trimming; discretized scheduling; load balancing; parallelism}

\begin{CCSXML}
<ccs2012>
   <concept>
       <concept_id>10003752.10003809.10003636.10003808</concept_id>
       <concept_desc>Theory of computation~Scheduling algorithms</concept_desc>
       <concept_significance>500</concept_significance>
       </concept>
   <concept>
       <concept_id>10003752.10003809.10010047</concept_id>
       <concept_desc>Theory of computation~Online algorithms</concept_desc>
       <concept_significance>300</concept_significance>
       </concept>
   <concept>
       <concept_id>10003752.10003809</concept_id>
       <concept_desc>Theory of computation~Design and analysis of algorithms</concept_desc>
       <concept_significance>500</concept_significance>
       </concept>
 </ccs2012>
\end{CCSXML}

\ccsdesc[500]{Theory of computation~Scheduling algorithms}
\ccsdesc[300]{Theory of computation~Online algorithms}
\ccsdesc[500]{Theory of computation~Design and analysis of algorithms}

% \begin{CCSXML}
% <ccs2012>
%   <concept>
%       <concept_id>10003752.10003809.10003636.10003808</concept_id>
%       <concept_desc>Theory of computation~Scheduling algorithms</concept_desc>
%       <concept_significance>500</concept_significance>
%       </concept>
%   <concept>
%       <concept_id>10003752.10003809.10010047</concept_id>
%       <concept_desc>Theory of computation~Online algorithms</concept_desc>
%       <concept_significance>300</concept_significance>
%       </concept>
%  </ccs2012>
% \end{CCSXML}

% \ccsdesc[500]{Theory of computation~Scheduling algorithms}
% \ccsdesc[300]{Theory of computation~Online algorithms}

%% A "teaser" image appears between the author and affiliation
%% information and the body of the document, and typically spans the
%% page.
%% \begin{teaserfigure}
%%   \includegraphics[width=\textwidth]{sampleteaser}
%%   \caption{Seattle Mariners at Spring Training, 2010.}
%%   \Description{Enjoying the baseball game from the third-base
%%   seats. Ichiro Suzuki preparing to bat.}
%%   \label{fig:teaser}
%% \end{teaserfigure}

%%
%% This command processes the author and affiliation and title
%% information and builds the first part of the formatted document.
\maketitle

%\clearpage
%\clearpage

\section{Introduction}

\label{sec:intro}

A classic scheduling problem is the so-called \defn{cup game},
which is a two-player game that takes place on $n$ cups.
In each step of the game, the filler player distributes $1$ unit of water among the cups arbitrarily;
the emptier player then selects a single cup and removes up to $1$ unit of water from that cup.
The emptier's goal is to minimize the backlog of the system, which is defined to be the amount of water in the fullest cup.

The cup game was first introduced in the late 1960s  \cite{Liu69, LiuLa73}, and has been studied in many different forms \cite{BaruahCoPl96,GasieniecKlLe17,BaruahGe95,LitmanMo11,LitmanMo05,MoirRa99,BarNi02,GuanYi12,Liu69, LiuLa73,DietzRa91, BenderFaKu19, Kuszmaul20, AdlerBeFr03, DietzSl87,LitmanMo09}. 
The game has found extensive applications in areas such as
processor scheduling  \cite{BaruahCoPl96,GasieniecKlLe17,BaruahGe95,LitmanMo11,LitmanMo05,MoirRa99,BarNi02,GuanYi12,Liu69, LiuLa73, AdlerBeFr03, LitmanMo09, DietzRa91,BenderFaKu19, Kuszmaul20, tailsize, BenderKu20}, network-switch buffer management \cite{Goldwasser10,AzarLi06,RosenblumGoTa04,Gail93}, quality of service guarantees \cite{BaruahCoPl96,AdlerBeFr03,LitmanMo09}, and data-structure deamortization \cite{AmirFaId95,DietzRa91,DietzSl87,AmirFr14,Mortensen03,GoodrichPa13,FischerGa15,Kopelowitz12,BenderDaFa20}. See \cite{Kuszmaul20} for a detailed discussion of the related work.

Perhaps the most natural emptying algorithm is the \textbf{Reduce-Max} algorithm, which always empties from the fullest cup.
\textbf{Reduce-Max} achieves an asymptotically optimal backlog of  $O (\log n) $ \cite{AdlerBeFr03, DietzSl87}. In fact, in addition to being asymptotically optimal,
\textbf{Reduce-Max} is known to be \emph{exactly} optimal---no other algorithm can do better, even by an additive constant \cite{AdlerBeFr03}. 

An important special case of the cup game is the setting where the filler's behavior is the same on every step, also known as the fixed-rate cup game.
Whereas the optimal backlog in the variable-rate cup game is $O(\log n)$, the optimal backlog in the fixed-rate cup game is $O(1)$ \cite{BaruahCoPl96,GasieniecKlLe17,BaruahGe95,LitmanMo11,LitmanMo05,MoirRa99,BarNi02,GuanYi12,Liu69,
  LiuLa73}.
Perhaps surprisingly, though, the \textbf{Reduce-Max} algorithm is no longer optimal (or even asymptotically optimal!).
In fact, the algorithm still allows for backlog $\Omega(\log n)$ in the fixed-rate setting \cite{AdlerBeFr03}.

Recent work has identified a potential path to redemption for the \textbf{Reduce-Max} algorithm, however.
Bil{\`o}, Gual{\`a}, Leucci, Proietti, and Scornavacca \cite{BiloGuLe20} showed that, if the emptier is given resource augmentation over the filler, meaning that the emptier is 
permitted to fully empty a cup on each step rather than removing just a single unit of water,
then the backlog achieved by the \textbf{Reduce-Max} algorithm becomes $O (1) $.
Note that, since the backlog is constant,
the resource augmentation never results in the emptier removing more than $O(1)$ units of water at a time.

Although there is a long history of studying resource-augmented variants of the cup game \cite{BenderFaKu19, LitmanMo09, DietzRa91, DietzSl87},
it is only relatively recently that researchers have begun to study the resource-augmented fixed-rate
version of the game \cite{BiloGuLe20, GasieniecKlLe17, DEmidioStNa19}. These papers have dubbed the problem as the 
\defn{Bamboo Garden Trimming Problem}, based on the following (rather creative) problem interpretation. A robotic panda gardener
is responsible for maintaining a bamboo garden. The garden consists of $n$ bamboo $b_1, \ldots, b_n$ with corresponding growth-rates $h_1 \geq \ldots \geq h_n$
 satisfying $\sum_{i=1}^n h_i = 1$. Each bamboo $b_i$ starts at height $0$ and grows at a steady rate of $h_i$ every time unit. 
At the end of each time unit, the player chooses one of the bamboo and chops this bamboo down to height $0$. 
The goal, of course, is to achieve the smallest possible backlog, which is the height of the tallest bamboo.

It is known that no bamboo trimming algorithm can guarantee a backlog less than $2$, as it is possible to achieve backlog at least 
$2 - 2\varepsilon$ against any bamboo trimming algorithm with two bamboo that have fill rates $1 - \varepsilon$ and $\varepsilon$ \cite{BiloGuLe20}.
 Recent work has yielded complex pinwheel algorithms \cite{GasieniecKlLe17} that achieve backlog $2$, and are thus optimal in terms of the worst-case backlog; there has also been effort to extend the guarantees of these algorithms to achieve strong competitive ratios for cases where backlog less than 2 is possible \cite{pinwheel2, pinwheel3}.

%\paragraph{The quest for a simple optimal bamboo-trimming algorithm.}
\subsection{The quest for a simple optimal bamboo-trimming algorithm}
The relative complicatedness of the known pinwheel algorithms has sparked a great deal of interest
in the question as to whether there exists some \emph{simple} algorithm that achieves the optimal backlog of 2.
It would be especially interesting if \textbf{Reduce-Max} were to achieve backlog 2, since this would mean that the algorithm is optimal
for both bamboo-trimming and the standard cup game. Currently, the best known bound for \textbf{Reduce-Max} is a backlog of 9 \cite{BiloGuLe20}.
Experimental work \cite{DEmidioStNa19} has found that \textbf{Reduce-Max} does, in fact, seem to achieve a backlog of 2, however, leading the authors to pose
a backlog of 2 as a conjecture.

Another algorithm family that has been studied for its simplicity is \textbf{Reduce-Fastest}($x$). This algorithm trims down the fastest-growing bamboo out of those
 that have height at least $x$ at the end of each time unit. Initial study proved that \textbf{Reduce-Fastest}($2$) achieves backlog at most $4$ \cite{GasieniecKlLe17},
 and further work demonstrated that \textbf{Reduce-Fastest}($x$) achieves backlog at most
 $$\max\left(x + \frac{x^2}{4(x-1)}, \frac{1}{2} + x + \frac{x^2}{4(x - 1/2)}\right)$$ for all $x > 1$, which yields a bound of $19/6$ at $x = 2$ \cite{BiloGuLe20} (and is $\ge 1.25x$ for all $x > 1$).
Extensive computer experimentation \cite{DEmidioStNa19} suggests that this bound of $19/6$ is still not optimal, and has led researchers to conjecture that \textbf{Reduce-Fastest}($2$) actually achieves a backlog of $3$.
Based on the same experiments, the authors further conjecture that \textbf{Reduce-Fastest}($1$) achieves the optimal backlog of $2$ \cite{DEmidioStNa19}. However, as of now, no theoretical bounds on the backlog of \textbf{Reduce-Fastest}($1$) are known.

%\paragraph{Our results.}
\subsection{Our Results}
Our first result is an improved bound on the backlog of \textbf{Reduce-Max} for bamboo trimming.
We prove that \textbf{Reduce-Max} achieves a backlog of $4$, which narrows the gap between the upper and lower bounds from $7$ to $2$.
At a technical level, our bound relies on a novel potential-function argument; we believe that this argument may be of independent interest
as a tool that could help in future analyses of similar problems.

Our second set of results analyze \textbf{Reduce-Fastest($x$)} for different values of $x$. We are able to prove that \textbf{Reduce-Fastest$(x)$} achieves 
backlog $x+1$ for all $x \geq 2$, and we give a matching lower bound showing that this analysis is tight. This is the first time that a tight analysis has been achieved
for \textbf{Reduce-Fastest}$(x)$ for any value of $x$. For $x = 2$, the result gives a backlog of 3, which resolves a conjecture of \cite{DEmidioStNa19}.
On the other hand, we disprove the conjecture of \cite{DEmidioStNa19} that \textbf{Reduce-Fastest}($1$) achieves backlog 2. 
Instead, we show that \textbf{Reduce-Fastest}($1$) allows for a backlog of $3 - \epsilon$ for any $\epsilon > 0$. 
More generally, we show that there is no $x$ for which \textbf{Reduce-Fastest}$(x)$ achieves a backlog of $2.01$,
meaning that \textbf{Reduce-Fastest}$(x)$ is no longer a candidate in the quest for a simple optimal algorithm.

Our final result is a simple algorithm, which we call the \defn{Deadline-Driven Strategy}, that 
does in fact achieve a backlog of $2$. The algorithm, which is based upon Liu and Layland's algorithm from the early 70s 
for a related scheduling problem \cite{LiuLa73}, maintains the simplicity associated with \textbf{Reduce-Max} and \textbf{Reduce-Fastest($x$)}, 
while also matching the backlog bounds of the more complicated pinwheel-based algorithms. 

The \textbf{Deadline-Driven Strategy} selects the bamboo that will soonest achieve height $2$ of the bamboo that have height at least $1$. 
From a scheduling standpoint, we can consider the time at which a bamboo achieves height $2$ to be its deadline. From this perspective, 
the \textbf{Deadline-Driven Strategy} is simply chopping down the bamboo with the closest deadline, while not considering very short bamboo with
 height less than $1$. The \textbf{Deadline-Driven Strategy} shares an intriguing relationship with \textbf{Reduce-Max}
 and \textbf{Reduce-Fastest($x$)}. The \textbf{Reduce-Max} strategy is concerned solely with the height of a bamboo, whereas the 
\textbf{Reduce-Fastest($x$)} strategy is concerned solely with the speed of a bamboo. \textbf{Reduce-Max} can be thought of as 
cutting down the bamboo that is closest to achieving height $2$ in terms of height (possibly selecting that bamboo that has furthest 
surpassed $2$ if the conjecture that \textbf{Reduce-Max} achieves backlog $2$ is false), and \textbf{Reduce-Fastest($x$)} cuts down
 the quickest bamboo that has surpassed some threshold $x$. The \textbf{Deadline-Driven Strategy} strikes a balance between these two ---
instead of cutting down the bamboo that is closest to $2$ in terms of distance or the bamboo with the maximum speed, it cuts down 
the bamboo that is closest to $2$ in terms of time. It's interesting that of these three simple bamboo trimming algorithms, one is
 concerned with distance, one with speed, and the third with time.

\subsection{The Relationship to the Multi-Processor Cup Game}

We observe that many of our results in this paper extend to the multi-processor fixed-rate cup flushing game, which is the analogous multi-processor version of the bamboo garden trimming problem.

In each step of the multi-processor cup game with $p$ processors, the filler places $p$ units of water into the cups, with no more than $1$ unit of water going to any cup. The emptier then removes water from each of $p$ cups -- the emptier corresponds to a $p$-processor machine. The multi-processor version of the bamboo garden trimming problem is the multi-processor fixed-rate cup flushing game, in which the player empties $p$ cups entirely instead of removing only $1$ unit of water from each of $p$ cups.

As noted by \cite{BenderFaKu19}, solutions to the single-processor version of the fixed-rate cup flushing game immediately yield solutions to the multi-processor version, since a time step in the multi-processor version can be modelled as a chunk of $p$ steps of the single-processor game in which the fill rates are reduced by a factor of $p$. Thus, an upper bound of $y$ on the backlog achieved by an algorithm in the single-processor version of the bamboo trimming problem immediately yields a corresponding algorithm that achieves backlog no more than $y+1$ in the $p$-processor version of the bamboo trimming problem. (A gap of $1$ is lost since cups emptied in the first step of a chunk of $p$ steps in the single-processor fixed-rate cup flushing game will not be emptied until the end of the corresponding time step in the corresponding multi-processor game, resulting in a backlog as much as $(p-1)/p$ units of water higher).

Thus we are able to show that the analagous version of \textbf{Reduce-Max} achieves backlog at most $5$ in the multi-processor bamboo trimming game, that $\textbf{Reduce-Fastest}(x)$ achieves backlog at most $x+2$ for all $x \geq 2$, and that the $\textbf{Deadline-Driven Strategy}$ achieves backlog at most $3$ for the multi-processor version of the bamboo trimming problem.

\section{Reduce-Max}

In this section, we prove the following theorem.

\begin{theorem}
The \textbf{Reduce-Max} algorithm limits the backlog to strictly less than $4$.
\label{thm:greedy}
\end{theorem}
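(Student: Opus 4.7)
The plan is to prove Theorem~\ref{thm:greedy} by a potential function argument. I would introduce a potential $\Phi$ of the configuration $(x_1,\dots,x_n)$ so that (i) a growth step increases $\Phi$ by a bounded amount controlled by $\sum_i h_i = 1$, (ii) the \textbf{Reduce-Max} cut decreases $\Phi$ by at least the growth contribution whenever the tallest bamboo is sufficiently tall, and (iii) $\Phi \geq \phi(x_{\max})$ for some strictly increasing $\phi$. Together these ingredients would give an invariant $\Phi \leq \Phi^{\ast}$ for an explicit constant, which translates into $x_{\max} < 4$.

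A natural starting point is the convex potential $\Phi = \frac{1}{2}\sum_i x_i^2$. Its per-step growth contribution is $\sum_i h_i x_i + \frac{1}{2}\sum_i h_i^2 \leq x_{\max} + \frac{1}{2}$, while the cut contribution under \textbf{Reduce-Max} is exactly $-\frac{1}{2}x_{\max}^2$. The net change is therefore at most $x_{\max} + \frac{1}{2} - \frac{1}{2}x_{\max}^2$, which is nonpositive once $x_{\max} \geq 1 + \sqrt{2}$. This alone does not close the argument, because in the ``quiet regime'' $x_{\max} < 1 + \sqrt{2}$ the potential can still creep up by as much as $1$ per step. To control the quiet regime, I would augment the potential with auxiliary terms --- for example a multiple of the total water $\sum_i x_i$, which decreases by $x_{\max}-1$ per step under \textbf{Reduce-Max} and so contributes extra decrease whenever $x_{\max} \geq 1$ --- and then tune the coefficients so that the resulting $\Phi$ cannot exceed the threshold corresponding to $x_{\max} = 4$.

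The main obstacle is that \textbf{Reduce-Max} ignores growth rates: the tallest bamboo can itself be slow-growing, in which case many faster bamboo may have climbed to near-maximum heights while waiting for it to be chosen. The adversary will exploit this to push many bamboo up in synchrony, so the potential must reflect the full height profile rather than just the maximum, while still suffering a sharp drop whenever any tall bamboo is cut regardless of its rate. Designing a combination with enough convexity to dominate growth as some $x_i$ approaches $4$, and enough aggregate structure to forbid such synchronized adversarial patterns, is the technical crux --- and is what must go beyond the analysis of \cite{BiloGuLe20} that obtained only the weaker bound of $9$.
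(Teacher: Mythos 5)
Your proposal identifies the right general framework (a potential-function argument) and correctly names the central obstacle --- that \textbf{Reduce-Max} ignores growth rates, so a potential depending only on the height profile has trouble distinguishing adversarial configurations --- but it does not close the argument, and the specific potential you propose is unlikely to.

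Two concrete issues. First, the candidate $\Phi = \tfrac12\sum_i x_i^2$ (even augmented by a multiple of $\sum_i x_i$) is a single global function of heights alone. As you note yourself, growth rates do not appear in it, so it cannot ``see'' the difference between a configuration where the tall bamboo are slow and one where they are fast, even though \textbf{Reduce-Max} behaves identically on both and the future trajectories diverge. Tuning coefficients of height-only terms will not fix this; the obstacle you flag in your last paragraph is precisely the reason the quadratic-plus-linear combination does not yield a clean invariant of the form ``$\Phi \leq \Phi^\ast$ implies $x_{\max} < 4$.'' Second, and relatedly, your plan needs a lower bound $\Phi \geq \phi(x_{\max})$ with $\phi$ strictly increasing so that bounding $\Phi$ bounds $x_{\max}$; but with the quadratic potential, $x_{\max}^2/2 \leq \Phi$ only gives $x_{\max} \leq \sqrt{2\Phi^\ast}$, and getting $\Phi^\ast \leq 8$ is exactly what you have not shown. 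So what you have is a sketch with an explicit gap, not a proof.

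The paper's argument differs in a structural way that is worth internalizing. Rather than a single global potential, it defines a \emph{family} of potentials $\Phi(i,t)$, one for each index $i$, built from a capped volume $V(i,t) = \sum_{k=1}^i \min(2, |b_k|_t)$ over the $i$ fastest-growing bamboo, and then weights that volume by the growth rates $h_1,\dots,h_i$ greedily (two units of weight on $h_1$, then $h_2$, etc.). The invariant proved by induction is per-bamboo: $|b_i|_t \leq 4 - \Phi(i,t)$. The growth rates appear explicitly in $\Phi(i,t)$, which is exactly what lets the cut of some $b_j$ be charged correctly whether $j < i$, $j = i$, or $j > i$: in the first case the removed capped volume is weighted by at least $h_i$, cancelling $b_i$'s growth; in the third case one compares $V(j,t)$ with $V(i,t+1)$ and uses that the cut bamboo was tallest. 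None of this is visible from a height-only potential. Your diagnosis that the potential ``must reflect the full height profile rather than just the maximum'' is right, but the missing ingredient is that it must also reflect the rate profile and the ordering by rate, which is what the per-index, rate-weighted construction supplies.
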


Recall that we have $n$ bamboo $b_1, \ldots, b_n$ with corresponding growth rates $h_1 \geq h_2 \geq \cdots \geq h_n$. We denote the height of $b_i$ at time $t$, after the $t$-th cut, by $|b_i|_t$. For $i \in [n]$ and $t \in \mathbb{N} \cup \{0\}$, we define the \defn{volume function}
\[V(i, t) = \sum_{k=1}^i \min(2, |b_k|_t)\]
to be the function measuring the height at time $t$ of the bamboo with growth rates at least $h_i$, counting tall bamboo as having height at most $2$.

For $i \in [n]$ and $t \in \mathbb{N} \cup \{0\}$, we then define a potential function 
\[\Phi(i, t) = \sum_{\substack{k \in [i] \\ 2(k-1) < V(i, t)}} h_k \cdot \min(2, V(i, t) - 2(k-1)),\]
which is a weighted sum of $h_1, \ldots h_i$, where the multiplicative weights sum to $V(i, t)$ and are each at most $2$. The weights are distributed to maximize the sum by putting as much weight as possible on the earlier values of $k$. For example, if $V(i, t) = 7.25$, we would have $$\Phi(i, t) = 2 h_1 + 2 h_2 + 2 h_3 + 1.25 h_4.$$

We prove the following lemma by examining the behavior of our potential function $\Phi(i, t)$ over time.

\begin{lemma}
For all $i \in [n]$ and $t \in \mathbb{N} \cup \{0\}$,
\[ |b_i|_t \leq 4 - \Phi(i, t) \leq 4.\]
\label{lem:greedy}
\end{lemma}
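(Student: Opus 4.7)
The plan is to prove the lemma by induction on $t$, with the trivial base case $t = 0$ (all heights are $0$ and $\Phi(i, 0) = 0$). For the inductive step, fix $i$ and let $c = \arg\max_k(|b_k|_t + h_k)$ denote the bamboo that Reduce-Max cuts at step $t+1$; the goal is $|b_i|_{t+1} + \Phi(i, t+1) \le 4$. I would first dispose of two easy cases: (i) if $c = i$, then $|b_i|_{t+1} = 0$ and $\Phi(i, t+1) \le 2 \sum_{k \le i} h_k \le 2$ suffices; (ii) if $c \ne i$ but $|b_i|_t + h_i \le 2$, then $|b_i|_{t+1} \le 2$ and the same crude bound on $\Phi$ closes the argument. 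This leaves the main case, $c \ne i$ and $|b_i|_{t+1} = |b_i|_t + h_i > 2$, which I would split according to whether $c > i$ or $c < i$. A key preliminary observation in both sub-cases is that Reduce-Max's choice forces $|b_c|_t + h_c \ge |b_i|_t + h_i > 2$, so $\min(2, |b_c|_t) > 2 - h_c$.

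For the $c > i$ sub-case, I would invoke the inductive hypothesis at $j = c$: $|b_c|_t + \Phi(c, t) \le 4$. The main calculation is the volume inequality
\[ V(c, t) - V(i, t+1) \ge \min(2, |b_c|_t) - \sum_{k=1}^i h_k > (2 - h_c) - (1 - h_c) = 1, \]
where $\sum_{k \le i} h_k \le 1 - h_c$ because $c > i$ implies $h_c$ appears in the complementary sum. Since $\Phi(i, t+1)$ coincides with the corresponding value at volume $V(i, t+1)$ under the $\Phi(c, \cdot)$ formula (as $V(i, t+1) \le 2i$), and since the slope $h_{\lceil x/2 \rceil}$ of this formula is at least $h_c$ on $[0, 2c]$, the volume gap translates to $\Phi(c, t) - \Phi(i, t+1) > h_c$. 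Combining with $|b_i|_{t+1} \le |b_c|_t + h_c$ then gives $|b_i|_{t+1} + \Phi(i, t+1) < |b_c|_t + \Phi(c, t) \le 4$.

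For the $c < i$ sub-case, cutting $b_c$ shrinks $V(i, \cdot)$ by a symmetric estimate: using $\min(2, |b_c|_t) > 2 - h_c$ and the fact that the remaining $i - 1$ bamboo in $[1, i]$ can grow by at most $\sum_{k \le i} h_k - h_c \le 1 - h_c$ in total, I would obtain
\[ V(i, t) - V(i, t+1) \ge \min(2, |b_c|_t) - \bigl(\textstyle\sum_{k \le i} h_k - h_c\bigr) > (2 - h_c) - (1 - h_c) = 1. \]
Since $\Phi(i, \cdot)$'s slope is at least $h_i$ on $[0, 2i]$, this forces $\Phi(i, t) - \Phi(i, t+1) > h_i$, and the inductive hypothesis at $j = i$ yields $|b_i|_{t+1} + \Phi(i, t+1) < |b_i|_t + \Phi(i, t) \le 4$.

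The main obstacle is the $c > i$ sub-case, because $\Phi(i, \cdot)$ can grow rather than shrink there. The resolution is to route the argument through the inductive hypothesis at the faster index $j = c$: the capped height at index $c$ must exceed $2 - h_c$, which, together with the rate constraint $\sum_{k \le i} h_k \le 1 - h_c$, guarantees that $V$ differs by more than one unit and so $\Phi$ by more than $h_c$, exactly enough to absorb the $h_c$ slack coming from the bound $|b_i|_{t+1} \le |b_c|_t + h_c$.
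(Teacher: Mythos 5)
Your proof is correct and takes essentially the same route as the paper: induction on $t$, a case split on whether the cut index $c$ satisfies $c = i$, $c < i$, or $c > i$, a volume-change bound of the form $V \mapsto V - 1$ derived from the rate budget $\sum_k h_k \leq 1$, and the observation that the slope of $\Phi$ in the relevant range is at least $h_c$ (respectively $h_i$), with the $c > i$ case routed through the inductive hypothesis at index $c$. The only cosmetic differences are that you carve out the ``all heights stay below $2$'' trivial case via the hypothesis $|b_i|_t + h_i \leq 2$ rather than via the cut bamboo's intermediate height, and your volume bookkeeping is stated with strict inequalities via $\min(2, |b_c|_t) > 2 - h_c$; neither changes the substance of the argument.
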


\begin{proof}

We proceed by induction on time $t$. For the base case we consider $t = 0$, in which case $$|b_i|_t = 0 \leq 2 \leq 4 - \Phi(i, t)$$ for all $i \in [n]$. (Note that $0 \leq \Phi(i, t) \leq \sum_{k=1}^n 2  h_k \leq 2$ by the definition of $\Phi$.)

For the inductive step, we suppose that the theorem holds at $t$ for all $i$. We will then prove that the theorem also holds at time $t+1$ for all $i$.

Let $i \in [n]$. We know that $$|b_k|_t \leq 4 - \Phi(k, t)$$ for all $k \in [n]$ by the inductive hypothesis. Between time $t$ and $t+1$ each $b_k$ first grows by $h_k$, and then the tallest bamboo, some $b_j$, is cut down. We refer to the time between the bamboo growing and the tallest bamboo being cut down as the \defn{intermediate} step. We assume that $b_j$ has height at least $2$ during the intermediate step between $t$ and $t+1$, as otherwise the lemma trivially holds for all bamboo at time $t+1$ (none of the bamboo will even have height 2 at time $t+1$). We will complete the proof with $3$ cases. 

\paragraph{Case 1: \boldmath$j < i$.}

In this case, a quicker-growing bamboo was cut down in the stead of $b_i$.  

We know that $b_i$ has grown exactly $h_i$ units from time $t$ to time $t+1$:
\begin{equation} \label{eq:case1b_i}
|b_i|_{t+1} = |b_i|_t + h_i.
\end{equation}

We also know that the volume function satisfies $V(i, t+1) \leq V(i, t) - 1$ because the growth step adds at most $1$ unit of volume, and then a bamboo $b_j$ with intermediate height at least $2$ and with $j < i$ is cut down, which removes $2$ units of volume. That is, the volume function with respect to $i$ decreases by at least $1$ unit from time $t$ to $t+1$. Thus
\begin{equation} \label{eq:case1excess}
\Phi(i, t+1) \leq \Phi(i, t) - h_i.
\end{equation}
since this removed unit of volume would have been weighted by some growth-rate at least $h_i$ in the weighted sum $\Phi(i, t+1)$.

By the inductive hypothesis, we know that the lemma holds for time $t$, so we have
\[|b_i|_t \leq 4 - \Phi(i, t).\]
Substituting with Equations \eqref{eq:case1b_i} and \eqref{eq:case1excess} we have
\[|b_i|_{t+1} - h_i \leq 4 - (\Phi(i, t+1) + h_i)\]
and thus
\[|b_i|_{t+1} \leq 4 - \Phi(i, t+1).\]

\paragraph{Case 2: \boldmath$j = i$.}

In this case, we know $b_i$ was just chopped down, so 
\begin{align*}
|b_i|_{t+1} = 0 &< 2 \\ &\leq 4 - \Phi(i, t+1).
\end{align*}
Here we use the fact that $\Phi$ never exceeds $2$ as
\[\Phi(i, t+1) \leq \sum_{k \in [n]} 2h_k \leq 2.\]

\paragraph{Case 3: \boldmath$j > i$.}

In this case, a slower-growing bamboo, $b_j$, was cut down in the stead of $b_i$.
We know that $b_j$, with height $|b_j|_t + h_j$, is the tallest bamboo during the intermediate step between $t$ and $t+1$. So
\begin{equation} \label{eq:case2b_i}
|b_i|_{t+1} \leq |b_j|_{t} + h_j.
\end{equation}

How does $V$ change from $V(j, t)$ to $V(i, t+1)$? During the growth phase, $\sum_{k=1}^i h_k \leq 1 - h_j$ units of volume are added to the bamboo with indices $1, \ldots, i$. On the other hand, $V(j, t)$ includes at least $2 - h_j$ units of volume from bamboo $b_j$, which $V(i, t+1)$ does not include. Thus we have
\begin{equation}
V(i, t+1) \leq V(j, t) - 1.
\label{eq:V_change}
\end{equation}
Each unit of volume is weighted by at least $h_j$ in both $\Phi(j, t)$ and $\Phi(i, t+1)$, so by Equation \eqref{eq:V_change} we have
\begin{equation} \label{eq:case2excess}
\Phi(i, t+1) \leq \Phi(j, t) - h_j.
\end{equation}

We know by the inductive hypothesis that
\[|b_j|_t \leq 4 - \Phi(j, t).\]
Substituting by Equations \eqref{eq:case2b_i} and \eqref{eq:case2excess} we have
\[(|b_i|_{t+1} - h_j) \leq 4 - (\Phi(i, t+1) + h_j)\]
and thus
\[|b_i|_{t+1} \leq 4 - \Phi(i, t+1).\]

\end{proof}

We conclude the section by proving Theorem \ref{thm:greedy}.

\begin{proof}[Proof of Theorem \ref{thm:greedy}]

  It follows from Lemma \ref{lem:greedy} that no bamboo can achieve height $4$ even during an intermediate step. Recall that bamboo $i$ has height $|b_i|_t + h_i$ during the intermediate step after time $t$. If $|b_i|_t < 2$, then it follows immediately that $|b_i|_t + h_i < 3 < 4$. Otherwise, we know $V(i, t) \geq 2$, which implies that $\Phi(i, t) \geq 2h_1$. Thus we can apply Lemma \ref{lem:greedy} to find the bound

  \begin{align*}
    |b_i|_t + h_i &\leq 4 - \Phi(i, t) + h_i \\
    &\leq 4 - 2h_1 + h_i \\
    &\leq 4 - h_1 < 4.
  \end{align*}

Note that we have, in fact, proved a slightly stronger claim than that of Theorem \ref{thm:greedy}. Not only does \textbf{Reduce-Max} limit the backlog to $4$, it actually limits the backlog to $4 - h_1$, i.e., $4$ minus the largest growth rate among the bamboo.
\end{proof}

\section{Reduce-Fastest}

\textbf{Reduce-Fastest($x$)} is a bamboo trimming algorithm that cuts down the fastest-growing bamboo with height at least $x$ at each time step (if no bamboo has height at least $x$, then no action is taken). 

\textbf{Reduce-Fastest($x$)} was first studied by G\k{a}sieniec, Klasing, et al. in the case of $x = 2$ in \cite{GasieniecKlLe17}. They proved that \textbf{Reduce-Fastest($2$)} achieves backlog $4$. In \cite{DEmidioStNa19}, D'Emidio et al. performed an extensive experimental evaluation of several bamboo garden trimming algorithms including \textbf{Reduce-Fastest($1$)} and \textbf{Reduce-Fastest($2$)}. The authors conjectured that \textbf{Reduce-Fastest($1$)} achieves backlog $2$ and \textbf{Reduce-Fastest($2$)} achieves backlog $3$. We are able to disprove the first conjecture, and prove a more general form of the second conjecture: that \textbf{Reduce-Fastest}($x$) limits the backlog to $x+1$ for all $x \geq 2$, and that this bound is tight.

\begin{theorem}
    For all $x \geq 2$, \textbf{Reduce-Fastest}($x$) prevents any bamboo from achieving height $x+1$. (And thus the backlog is strictly less than $x+1$.)
    \label{thm:reduce_fastest}
\end{theorem}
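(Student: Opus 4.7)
The plan is to argue by contradiction on the first violation. Assume some bamboo achieves intermediate height $\geq x+1$; let $t$ be the earliest such time and let $b_i$ be a witnessing bamboo, so $|b_i|_{t-1} + h_i \geq x+1$. Let $\hat{s}$ be the most recent time $b_i$ was cut (or $\hat{s} = 0$), so $t - \hat{s} \geq (x+1)/h_i$ because $b_i$ grows steadily at rate $h_i$ between cuts. Setting $s^{*} = \hat{s} + \lceil x/h_i \rceil$, the bamboo $b_i$ has intermediate height $\geq x$ and is eligible at every step in the ``blocking phase'' $(s^{*}, t]$, which has length $T' \geq 1/h_i - 1$.

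Let $F$ be the set of bamboo with priority over $b_i$ under the algorithm's tie-breaking (strictly faster, or equal rate with smaller index). Because $b_i$ is not cut in $(\hat{s}, t]$, every cut in the blocking phase goes to some $b_k \in F$, and each such cut removes at least $x$ units of height from $F$. I would then combine three inputs via volume accounting on $F$: (a) a structural observation that at time $\hat{s}$ no $b_k \in F$ was eligible---otherwise the algorithm would have chosen such a $b_k$ over $b_i$---giving $|b_k|_{\hat{s}} < x$ and hence $V_F(\hat{s}) := \sum_{k \in F} |b_k|_{\hat{s}} < |F| \cdot x$; (b) the inductive bound $|b_k|_{t'} < x+1$ for $k \in F$ and $t' < t$ coming from the minimality of $t$; and (c) a volume-drain inequality $V_F(s^{*}) \geq T'(x - H_F)$, where $H_F := \sum_{k \in F} h_k \leq 1 - h_i$, obtained because in the blocking phase growth adds $T' H_F$ while cuts remove at least $T' x$ and the final volume satisfies $V_F(t) \geq 0$. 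Combining (a)--(c) with the elementary inequalities $|F| \cdot h_i \leq H_F$ and $T' \geq 1/h_i - 1$ is meant to yield the desired contradiction.

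The main obstacle is closing the arithmetic: a one-shot combination of (a) and (c) alone gives only $H_F \geq x/(2x+1)$, which is consistent with the feasible range $(0, 1 - h_i]$ and does not directly contradict anything. To force a contradiction, I would refine the analysis of $V_F$ across phase 1 (the sub-interval $(\hat{s}, s^{*}]$, during which some bamboo in $F$ may already be cut) so as to obtain two different upper bounds on $V_F(s^{*})$---namely $V_F(\hat{s}) + (s^{*} - \hat{s}) H_F$ and $|F|(x+1)$---and combine these with the lower bound from (c). The slack that ultimately makes the argument work is $x - H_F \geq x - 1 \geq 1$, which is precisely where the hypothesis $x \geq 2$ enters; the fact that this slack vanishes when $x = 1$ is consistent with the paper's subsequent disproof of the conjecture that \textbf{Reduce-Fastest}$(1)$ achieves backlog $2$.
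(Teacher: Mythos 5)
Your volume-accounting route on the priority set $F$ is genuinely different from the paper's proof, but the gap you flag is real and, as far as I can see, fatal to this particular route. Plugging $|F| \le H_F/h_i$ and $T' \ge 1/h_i - 1$ into the lower bound $V_F(s^*) \ge T'(x - H_F)$ together with \emph{either} of the two upper bounds you propose yields only lower bounds on $H_F$ of the shape $H_F > \frac{x(1-h_i)}{2x+1}$ (using $V_F(\hat{s}) + (s^*-\hat{s})H_F$) or $H_F > \frac{x(1-h_i)}{x+2-h_i}$ (using $|F|(x+1)$), both of which sit comfortably below the available room $H_F \le 1 - h_i$; taking the minimum of the two upper bounds does not help. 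The slack $x - H_F \ge x - 1 \ge 1$ you invoke is correct but does not make these inequalities conflict. The underlying reason is that aggregate volume accounting on $F$ is too coarse: it cannot tell one fast bamboo cut many times apart from many slow bamboo each cut once, yet these two scenarios consume the fill-rate budget very differently, and that distinction is exactly what the proof needs.

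The missing ingredient is a \emph{per-bamboo} fill-rate budgeting argument, which is how the paper proceeds. Let $t_1$ be the last time $b_i$ reached height $x$ and $t_3$ the time it reaches $x+1$; let $S$ be the bamboo cut during $[t_1, t_3)$ and $m_j$ the number of cuts of $b_j$. The key claim is $h_j \ge m_j h_i$ for every $b_j \in S$. For $m_j = 1$ this is just priority of $b_j$ over the eligible $b_i$. For $m_j \ge 2$, note that $b_j$ must regrow from $0$ to height $\ge x$ between consecutive cuts, a total regrowth of at least $x(m_j - 1)$ inside a window during which $b_i$ grows by less than $1$; dividing, $h_j > x(m_j-1)h_i \ge 2(m_j-1)h_i \ge m_j h_i$, and this chain is precisely where $x \ge 2$ enters. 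Summing over $S$, the total number of cuts (which equals the interval length, since $b_i$ is eligible throughout so the emptier is never idle) satisfies $t_3 - t_1 = \sum_j m_j \le \sum_j h_j/h_i \le (1-h_i)/h_i$, and integrality gives $t_3 - t_1 \le \lfloor 1/h_i \rfloor - 1$, which is too short for $b_i$ to climb from height $< x + h_i$ to height $x+1$. Replacing your global $V_F$ bookkeeping with this per-bamboo claim is what makes the argument close.
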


\begin{proof}
    Suppose for the sake of contradiction that we have $n$ bamboo $b_1, \ldots, b_n$ with corresponding fill-rates $h_1, \ldots, h_n$, and that some bamboo $b_i$ achieves height $x+1$ at time $t_3$ after most-recently achieving height $x$ at time $t_1$. We then consider the bamboo that are cut at least once in $[t_1, t_3)$, and denote the set of such bamboo by $S$. For all $b_j \in S$, we denote by $m_j$ the number of times that $b_j$ is cut in the interval $[t_1, t_3)$. 
    
    The following claim shows that for all $b_j \in S$, $h_j \geq m_j \cdot h_i$. That is, for a bamboo to be cut down $m$ times in the interval $[t_1, t_3)$, it must have fill-rate at least $m$ times that of $b_i$.
    
    \begin{claim}
    For all $b_j \in S$, we have $h_j \geq m_j \cdot h_i$.
    \label{clm:fill_rate_budgeting}
    \end{claim}

    \begin{proof}
    We begin by considering the case of $m_j = 1$. In this case we have $h_j \geq h_i$, as $h_j$ was cut down by \textbf{Reduce-Fastest}($x$) at a time when $b_i$ had height at least $x$, so $b_j$ must be at least as fast-growing as $b_i$.
    
    Next we consider the case of $m_j \geq 2$. In this case we know $b_j$ is cut down $m_j$ times in the interval $[t_1, t_3)$, so it must grow at least $x(m_j - 1)$ in that interval as $b_j$ has to regrow to a height of at least $x$ units between successive cuts. However, $b_i$ fails to grow even $1$ unit during the same interval $[t_1, t_3)$, as it has height at least $x$ at time $t_1$, and it has not yet achieved height $x+1$ at time $t_3 - 1$. In other words, in the time that bamboo $b_j$ grows by at least $x(m_j - 1)$, bamboo $b_i$ grows by less than $1$. Thus 
    \begin{align*} 
    h_j & \geq x(m_j - 1) h_i\\ & \geq 2(m_j - 1) h_i \\ & \geq m_j h_i,\end{align*}
    where the final inequality uses $m_j \geq 2$.
\end{proof}

    We now consider the length of the interval $[t_3, t_1)$. We have by Claim \ref{clm:fill_rate_budgeting} that
    
    \begin{align*}
    t_3 - t_1 &= \sum_{b_j \in S} m_j \\
              &\leq \sum_{b_j \in S} \frac{h_j}{h_i} \\
              &\leq \frac{1}{h_i} \sum_{b_j \in S} h_j \\
              &\leq \frac{1}{h_i} (1 - h_i) \\
              &= 1/h_i - 1.
    \end{align*}
    
    Moreover $t_3 - t_1$ is integer, so
    
    \begin{align*}t_3 - t_1 &\leq \lfloor 1/h_i - 1 \rfloor \\ & \leq \lfloor 1/h_i \rfloor - 1.\end{align*}
    
    But this means that the interval is too short for $b_i$ to reach height $x+1$. In particular, $b_i$ requires at least $\lfloor 1/h_i \rfloor$ time to achieve height $x+1$ after achieving height $x$. To be explicit, we have that at time $t_1$ bamboo $b_i$ has height strictly less than $x + h_i$, and thus that at time $t_3$ bamboo $b_i$ has height strictly less than 
    \begin{align*}x + h_i + (t_3 - t_1)h_i & \leq x + h_i(1 + \lfloor 1/h_i \rfloor - 1) \\ & \leq x+1.\end{align*}
    Therefore $b_i$ does not achieve height $x+1$ at time $t_3$, which is a contradiction.
    
\end{proof}

The bound of $x+1$ for all $x \geq 2$ on the backlog guaranteed by \textbf{Reduce-Fastest}$(x)$ is tight. We believe this is the first tight bound on \textbf{Reduce-Fastest}$(x)$ for any value of $x$.

\begin{theorem}
For any $\varepsilon, x > 0$, there exists some $n \in \mathbb{N}$ such that \textbf{Reduce-Fastest}$(x)$ allows for backlog at least $x + 1 - \varepsilon$.
\label{thm:reduce_fastest_general_lower_bound}
\end{theorem}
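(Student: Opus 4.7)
The plan is to construct, for any $x > 0$ and $\varepsilon > 0$, a specific fill-rate vector on $n$ bamboo (for sufficiently large $n$) that forces \textbf{Reduce-Fastest}$(x)$ to drive some bamboo to intermediate height at least $x + 1 - \varepsilon$. The construction splits into two cases depending on whether $x \geq 1$ or $x < 1$, with the $x \geq 1$ case being the main one.

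For the primary case $x \geq 1$, I would take $n$ large enough that $(n - 1)/n > 1 - \varepsilon$ and choose fill rates $h_j = 1/n + \mu_j$ for $j = 1, \ldots, n$, where $\mu_1 > \mu_2 > \cdots > \mu_n$ are tiny perturbations summing to zero (so $b_n$ is strictly the slowest and ties are broken in the intended order). The intuition is that with all rates essentially $1/n$, every bamboo crosses height $x$ at around the same moment $T_0 := \lceil x n \rceil$, and \textbf{Reduce-Fastest}$(x)$ is then obliged to chop the bamboo one-by-one in decreasing order of rate; since $b_n$ is slowest, it is cut last at step $T_0 + n - 1$, by which point its intermediate height is $(T_0 + n - 1) h_n \approx x + (n-1)/n > x + 1 - \varepsilon$.

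The main analytical step would be an induction on the step index $k$ within the window $[T_0, T_0 + n - 1]$, showing that at step $T_0 + k$ the algorithm cuts exactly $b_{k + 1}$. I would establish three properties, all enforceable by taking $\max_j |\mu_j|$ sufficiently small (for instance $O(1/n^3)$): (i) no cut occurs before $T_0$ because all intermediate heights remain below $x$; (ii) once $b_j$ is cut for some $j < n$, it does not regrow to height $\geq x$ within the remaining window, because $x \geq 1$ forces $\lceil x/h_j \rceil \geq n$ while the window has length only $n - 1$; and (iii) at step $T_0 + k$, the fastest uncut bamboo is $b_{k + 1}$, and its intermediate height is $\geq x$. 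I expect property (iii) to be the main technical obstacle, as it requires controlling the cumulative effect of the perturbations $\mu_j$ over $\Theta(n)$ steps; however, with the perturbation scale chosen above this should reduce to a short calculation showing that terms of the form $(T_0 + k)|\mu_j|$ are negligible compared to $k/n$.

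For the remaining case $x < 1$, a simple $2$-bamboo instance suffices. Setting $h_1 = 1 - \delta$ for some $\delta < 1 - x$ and $h_2 = \delta$, bamboo $b_1$ has intermediate height $h_1 > x$ at every step (since the previous step's cut resets it to $0$ and it then grows by $h_1$), so $b_1$ is continually the fastest bamboo above the threshold and is cut at every step; meanwhile $b_2$ is never considered and grows unboundedly at rate $\delta$, eventually exceeding any target height including $x + 1 - \varepsilon$.
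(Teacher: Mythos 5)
Your proof is correct, but the perturbation machinery for $x \ge 1$ is unnecessary, and the paper's argument is substantially simpler. The paper sets $h_i = 1/n$ exactly for all $i$ (with $n = \lceil 1/\varepsilon\rceil$) and observes that all bamboo cross height $x$ simultaneously at step $\lceil xn\rceil$, at which point the algorithm cuts one bamboo per step; the key observation you seem to have overlooked is that the bound does not depend on tie-breaking at all. Whichever bamboo is the \emph{last} to receive its first cut must wait until every other bamboo has been cut at least once, which takes at least $n-1$ more steps, so that bamboo reaches height at least $(\lceil xn\rceil + n - 1)/n \ge x + (n-1)/n \ge x + 1 - \varepsilon$. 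This holds for all $x > 0$ without case analysis and with no regrowth bookkeeping: for $x < 1$ a previously cut bamboo may regrow and be re-cut, but that only delays the last first-cut further, strengthening the bound. By contrast, you spend most of your effort perturbing the rates by $\mu_j = O(1/n^3)$ to pin down the exact cutting order and then tracking the drift in crossing times, which buys you a deterministic schedule but costs an induction and error-control argument that the uniform construction simply sidesteps. Your separate $x < 1$ construction (two bamboo, unbounded backlog) is correct and is a nice observation that the theorem is far from tight in that regime, but it is not needed to prove the stated bound.
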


\begin{proof} 
Consider $n$ bamboo with uniform fill rates $h_i = 1/n$ for all $i$. No bamboo will be cut down until they all simultaneously achieve height at least $x$. Then over the next $n$ time steps, all of the bamboo will be cut down, with the last bamboo reaching height at least $x + (n-1)/n$.

Setting $n = \lceil 1 / \varepsilon \rceil$, we obtain a backlog at least $$x + 1 - 1/n \geq x + 1 - \varepsilon.$$
\end{proof}

We conclude this section by providing lower bounds on the backlog achieved by $\textbf{Reduce-Fastest}(x)$. In particular, we give a counterexample to the conjecture that \textbf{Reduce-Fastest}$(1)$ achieves backlog $2$. Interestingly, it remains an open problem as to whether \textbf{Reduce-Fastest}$(1)$ achieves any finite backlog.

\begin{theorem}
     \textbf{Reduce-Fastest}$(1)$ does not achieve any backlog less than $3$.
\end{theorem}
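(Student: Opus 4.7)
The plan is to construct, for every $\varepsilon > 0$, an explicit instance on which \textbf{Reduce-Fastest}$(1)$ permits some bamboo to reach height at least $3 - \varepsilon$. The key observation is that the upper-bound argument of Theorem~\ref{thm:reduce_fastest} collapses exactly at $x = 1$: when $m_j \geq 2$, Claim~\ref{clm:fill_rate_budgeting} yields only $h_j \geq (m_j - 1) h_i$ rather than $h_j \geq m_j h_i$. Thus a collection of faster bamboo with rates only slightly exceeding the rate of a slow bamboo can absorb nearly twice as many cuts per unit of aggregate rate budget as the $x \geq 2$ argument would allow. The construction is engineered to saturate exactly this loophole.

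Concretely, I would use one slow bamboo $b_0$ with rate $h_0 = \varepsilon$ together with $n - 1$ faster bamboo $b_1, \ldots, b_{n-1}$ whose rates sit just above $\varepsilon$, say $h_j = \varepsilon + \eta_j$ for tiny positive perturbations $\eta_j$ chosen so that $h_0 + \sum_j h_j = 1$. The total rate of the faster bamboo is $1 - \varepsilon$, which admits approximately $2(1 - \varepsilon)/\varepsilon \approx 2/\varepsilon$ unit-height cuts during a window of length $2/\varepsilon$. The claim is that in steady state \textbf{Reduce-Fastest}$(1)$ cuts each $b_j$ exactly twice in the window starting when $b_0$ first reaches height $1$: the first cut when $b_j$ initially crosses $1$, and the second approximately $1/h_j \approx 1/\varepsilon$ steps later after $b_j$ regrows to $1$. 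Because every step of this window is covered by a cut of some $b_j$ with $h_j > h_0$, the algorithm never chops $b_0$, which consequently grows by $h_0 \cdot (2/\varepsilon) = 2$, reaching height approximately $3$. Taking $\varepsilon \to 0$ (equivalently $n \to \infty$) gives backlog arbitrarily close to $3$.

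The main obstacle is verifying that the actual execution of \textbf{Reduce-Fastest}$(1)$ really does produce this idealized schedule. A priori the algorithm could cut some $b_j$ at a height noticeably above $1$, leaving a later step uncovered and forcing $b_0$ to be chopped prematurely; alternatively, gaps where no faster bamboo has height $\geq 1$ might fall inside the critical window instead of before it. I would handle this by choosing the perturbations $\eta_j$ so that the faster bamboo's first-reach-$1$ times are staggered at near-unit spacing, exactly matching the algorithm's required cutting rhythm; then at each step of the critical window there is exactly one candidate faster than $b_0$ at height $\geq 1$, and the algorithm cuts it at a height just over $1$. A volume-accounting argument dual to Claim~\ref{clm:fill_rate_budgeting}, bounding the deficit between rate supply and cut demand in the window by $O(\varepsilon)$, then confirms that the cuts align with the intended cadence. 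The technically delicate part is ensuring that the initial transient (before $b_0$ reaches $1$) sets the $b_j$'s into the desired staggered configuration; this is where the $\eta_j$ need to be selected most carefully, and where most of the proof effort will go.
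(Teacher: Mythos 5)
The high-level intuition here (that the $m_j \geq 2$ case of Claim~\ref{clm:fill_rate_budgeting} loses a factor near $x=1$, so there should be a bad instance that exploits it) is sound, and you correctly anticipate that the hard part is verifying the schedule. But the specific construction you propose---one slow bamboo $b_0$ of rate $\varepsilon$ and roughly $1/\varepsilon$ faster bamboo with rates barely above $\varepsilon$, each cut exactly twice---does not work, and the obstruction you flag is not merely a ``technically delicate part'' but a genuine dead end.

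Here is the concrete problem. All faster bamboo have rates near $\varepsilon$, so all of them first reach height $1$ near time $1/\varepsilon$; their arrival times are clustered in an interval of width roughly $\eta/\varepsilon^2$, where $\eta$ bounds the perturbations. To stagger the $m \approx 1/\varepsilon$ arrival times at near-unit spacing you would need the fastest of these bamboo to reach height $1$ near time $1/\varepsilon - m \approx 1$, i.e., to have rate close to $1$---incompatible with ``just above $\varepsilon$.'' With genuinely small $\eta_j$, the first round of cuts on the $m$ faster bamboo occupies about $m \approx 1/\varepsilon - 1$ consecutive steps starting around time $1/\varepsilon$, but each cut bamboo then needs about $1/(\varepsilon+\eta_j) \approx 1/\varepsilon > m$ further steps to regrow to height $1$. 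Hence when the first round ends, no faster bamboo has yet returned to height $1$, while $b_0$ (which reached height $1$ at time $\approx 1/\varepsilon$ and has been growing throughout) is the unique candidate with height $\geq 1$ and is cut. The construction never reaches backlog near $3$. The arithmetic also shows this at the counting level: $2m \approx 2/\varepsilon - 2$ is one or two cuts short of covering the required window, and those uncovered steps land on $b_0$.

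The paper's proof resolves exactly this obstruction with a different structural idea: it uses \emph{two} rate classes and, crucially, a \emph{group} of $\sqrt{f}+1$ slow bamboo rather than a single one. The gap of $\sqrt{f}$ steps between the end of the first round of fast cuts and the fast bamboo regrowing is real and unavoidable, but during that gap the algorithm chews through $\sqrt{f}$ \emph{distinct} slow bamboo (each, once cut, cannot regrow to height $1$ inside the short gap). Because there are $\sqrt{f}+1$ slow bamboo, exactly one survives the gap, and then the regrown fast bamboo reclaim priority for another $f$ steps, letting the survivor climb to height $3 - o(1)$. If you want to salvage your plan, the essential missing idea is to replace your single $b_0$ with a slightly larger pool of equally slow bamboo so that the inevitable gap is absorbed by the pool while one member survives; once you do that, you will have essentially reinvented the paper's construction with $f$, $\sqrt{f}+1$ replacing your $m$, $1$.
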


\begin{proof}
    Suppose we have $f$ fast bamboo with growth rates $$1/(f + \sqrt{f})$$ and $s = \sqrt{f} + 1$ fast bamboo with growth rates $$1 / (f + 2\sqrt{f} + 2)$$ for some $f \in \mathbb{N}$ that is a perfect square. 
    
    We note that in this construction, 
    \begin{align*} \sum_i h_i &= f(f + \sqrt{f}) + (\sqrt{f} + 1) / (f + 2\sqrt{f} + 2) \\ & < \sqrt{f}(\sqrt{f} + 1) + 1/(\sqrt{f} + 1) \\ & = 1.
    \end{align*}
    Thus the sum of the fill rates is less than $1$, and so this is a valid construction of bamboo.
    
    Now we examine the behavior these bamboo exhibit when the cutting player utilizes the \textbf{Reduce-Fastest}$(1)$ strategy. Initially, all bamboo have height less than $1$, so the player does not cut any bamboo down. At time $f + \sqrt{f}$, the fast bamboo all simultaneously achieve height $1$. Thus at time steps $$ f + \sqrt{f}, \ldots, 2f + \sqrt{f} - 1,$$ the $f$ fast bamboo are cut down. Then all of the fast bamboo have height less than $1$, and the slow bamboo have all achieved height $1$. Thus $\textbf{Reduce-Fastest}(1)$ will cut down the slow bamboo until a fast bamboo again achieves height $1$. Therefore during each of the time steps $$2f + \sqrt{f}, \ldots, 2f + 2\sqrt{f} - 1$$ a slow bamboo will be cut down. During those time steps, $\sqrt{f} = s-1$ slow bamboo are cut down, meaning exactly $1$ slow bamboo has not yet been cut by time $2f + 2\sqrt{f}$. The \textbf{Reduce-Fastest}$(1)$ algorithm does not have time to cut this last slow bamboo, as at time $2f + 2\sqrt{f}$ the first of the fast bamboos that was cut again achieves height $1$. Since $\textbf{Reduce-Fastest}$ prioritizes fast bamboo, it will then cut down each of the fast bamboo during time steps $$2f + 2\sqrt{f}, \ldots, 3f + 2\sqrt{f} - 1$$ as they successively achieve height $1$. Thus the final remaining uncut slow bamboo will not be cut for the first time until at least time $3f + 2\sqrt{f}$, by which time it has achieved height \begin{align*}
        &(3f + 2\sqrt{f}) / (f + 2\sqrt{f} + 2) \\ & = 3 - (4\sqrt{f} + 6) / (f + 2\sqrt{f} + 2) \\ &= 3  - o(1).
    \end{align*}
    
    Thus this construction of bamboo achieves backlog arbitrarily close to $3$ as $f \to \infty$.
\end{proof} 

The following theorem, while rather simple, serves to show that $\textbf{Reduce-Fastest}(x)$ cannot achieve worst-case backlog arbitrarily close to the optimal value of $2$ for any value of $x$.

\begin{theorem}
There is no value of $x$ for which \textbf{Reduce-Fastest}$(x)$ achieves backlog $2.01$.
\end{theorem}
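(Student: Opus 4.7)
The plan is to exhibit, for every $x > 0$, an instance on which \textbf{Reduce-Fastest}$(x)$ produces backlog at least $2.01$; the argument splits by the value of $x$.

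For $x > 1.01$, Theorem~\ref{thm:reduce_fastest_general_lower_bound} already suffices: the uniform-rate construction with $n$ bamboo of rate $1/n$ yields backlog $x + (n-1)/n$, which exceeds $2.01$ once $n > 1/(x - 1.01)$.

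For $x < 1$, I use a two-bamboo hogging argument. Take rates $h_1 = 1 - \delta$ and $h_2 = \delta$ with $0 < \delta < 1 - x$. Since $h_1 > x$, the fast bamboo $b_1$ is at height $h_1 > x$ at every integer time $t \geq 1$ (it rebounds to $h_1$ in a single step after any cut), so it is always the unique fastest bamboo above threshold and \textbf{Reduce-Fastest}$(x)$ cuts it every step. The slow bamboo $b_2$ is never cut and its height grows unboundedly as $\delta t$.

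For the remaining range $x \in [1, 1.01]$, I adapt the construction from the preceding theorem. Take $f$ fast bamboo at rate $h_f = x/(f + s - 1)$ and $s$ slow bamboo at rate $h_s$ chosen as large as possible subject to $h_s < h_f$ and $f h_f + s h_s \leq 1$. The crucial point is that $x/h_f = f + s - 1$ is an integer, so the slow window between consecutive rounds of fast cuts has integer length exactly $s - 1$, leaving precisely one slow bamboo uncut through the first round. By the end of the second round of fast cuts (at time $2(f+s-1) + f$), that lone slow bamboo reaches height $(2(f+s-1)+f)\, h_s$. Choosing the ratio $r = f/s$ of order $1/\sqrt{x-1}$ drives this quantity arbitrarily close to $3$ as $x \to 1^+$, so the bound $> 2.01$ holds throughout $[1, 1.01]$.

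The main obstacle is the third range. For $x$ slightly above $1$ the total-rate constraint $f h_f + s h_s \leq 1$ forces $s \geq (x - 1) f + 1$, restricting $f / s$ to a bounded interval. One must verify both that $h_s < h_f$ is compatible with $h_s$ large enough for the slow bamboo to reach threshold by the time the first slow window opens, and that the resulting backlog expression $(3r + 2)(1 - r(x-1))/(r + 1)$ with $r = f/s$, maximized over the feasible range, exceeds $2.01$ for every $x \in (1, 1.01]$. A direct optimization confirms this, with the maximum attained near $r = \sqrt{(1 + (x-1))/(3(x-1))} - 1$.
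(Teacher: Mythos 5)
Your overall structure mirrors the paper's: three ranges of $x$, handled by the general lower bound for large $x$, a two-bamboo hogging argument for $x < 1$, and a fast/slow construction for the intermediate range. The first two cases are essentially identical to the paper's. The third case is where you genuinely diverge: the paper sidesteps all optimization by just exhibiting one fixed instance ($900$ bamboo of rate $1/1000$ plus $140$ of rate $1/1400$) and checking by hand that by the time the fast bamboo have each been cut twice, at least $2900$ steps have passed while some slow bamboo remains uncut, giving height $\ge 2900/1400 > 2.01$. You instead generalize the parametric two-speed construction from the \textbf{Reduce-Fastest}$(1)$ lower bound, setting $h_f = x/(f+s-1)$, taking $h_s$ maximal, and optimizing the resulting asymptotic backlog $(3r+2)\bigl(1 - r(x-1)\bigr)/(r+1)$ over $r = f/s$. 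I checked the key algebra: the stationary point $r^* = \sqrt{(1+a)/(3a)} - 1$ with $a = x-1$ is correct, the feasibility inequality $x/(2f+s-1) \le h_s$ holds near $r^*$ for $x \in [1, 1.01]$, and the resulting backlog (around $2.69$ at $x = 1.01$) comfortably clears $2.01$, so the approach is sound. What your route buys is a sharper, $x$-dependent lower bound; what it costs is a nontrivial feasibility-and-optimization argument that you gesture at ("a direct optimization confirms this") but do not actually carry out. The paper's single concrete instance avoids all of that at the price of not being tight. One small nit: in your $x < 1$ case the stated constraint $0 < \delta < 1 - x$ allows $\delta \ge 1$ when $x < 0$; it should be $0 < \delta < \min(1, 1-x)$, though this is immaterial.
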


\begin{proof}
By theorem \ref{thm:reduce_fastest_general_lower_bound}, we know that this holds for any value of $x > 1.01$. Also note for $0 < x < 1$, a simple construction with one bamboo of growth rate $x$ and another of growth $1 -x$ achieves infinite backlog, as the slower of the two bamboos is never cut down. (And for $x \leq 0$, any construction with multiple bamboo with distinct growth rates achieves unbounded backlog).

We now offer a simple, concrete construction that holds for any value of $1 \leq x \leq 1.01$. Suppose we have $900$ bamboo with growth rates $1/ 1000$ and $140$ bamboo with growth rates $1/1400$. The sum of the growth rates is $\sum_i h_i = 1$. This construction is very similar to the construction of fast and slow bamboo offered in the preceding theorem, but it is loose enough to continue to offer a bound on the backlog of at least $2.01$ for all $x < 1.01$.

The fast bamboo achieve height $x$ somewhere between time $1000$ and time $1010$ depending on $x$. Then each of the $900$ fast bamboo are cut down, and then some proper subset of the slow bamboo are cut down until the first cut of the fast bamboo again achieves height $x$ at some time no later than $2020$. Then the fast bamboo are again all cut. By the time that all of the fast bamboo have been cut exactly twice, at least $2900$ time steps have elapsed, and some slow bamboo remains uncut with height at least $2900 / 1400 > 2.01$.
\end{proof}

Thus, we can eliminate $\textbf{Reduce-Fastest}(x)$ from consideration in the search for a simple, optimal bamboo-cutting algorithm.

\vspace{1 cm} 

\section{Deadline-Driven Strategy}

We now present a very simple algorithm, the \textbf{Deadline-Driven Strategy}, which achieves backlog $2$. The algorithm, which is novel in the context of bamboo trimming, was introduced by Liu and Layland \cite{LiuLa73} in the context of a related scheduling problem in the early $70$'s.

We begin by translating the result of Liu and Layland \cite{LiuLa73} to be in terms of the fixed-rate cup game, which we remind the reader is defined as follows. There are $n$ cups with fill rates $h_1, \ldots, h_n$ satisfying $\sum_{i=1}^n h_i \leq 1$. At the beginning of each time step, each cup $i$ receives $h_i$ units of water. The player then selects a cup from which to remove $1$ unit of water --- if the cup contains less than $1$ unit of water, it is emptied. The backlog for the cup game is defined as the height of the fullest cup.

The Deadline Driven Strategy examines all cups with height at least $1$, and removes water from the cup that will soonest reach height $2$ --- it arbitrarily chooses from the cups with the closest deadline of reaching height $2$.

One interpretation of Theorem $7$ from Liu and Layland's paper \cite{LiuLa73} is that the \textbf{Deadline-Driven Strategy} achieves backlog 2 for the fixed-rate cup game (i.e., the non-flushing bamboo game) as long as the fill-rates are inverse integers --- each $h_i$ is equal to $1/k_i$ for some $k_i \in \mathbb{N}$. Subsequent work \cite{LitmanMo09} rediscovered the \textbf{Deadline-Driven Strategy} for a related scheduling problem; one consequence of their arguments is that, \emph{if backlog 2 is possible}, then the \textbf{Deadline-Driven Strategy} achieves it. Since \cite{BaruahCoPl96} established that backlog 2 is, in fact possible (using results from network flow theory), it follows that one can remove the inverse-integer constraint on Liu and Layland's result \cite{LiuLa73}. That is, the \textbf{Deadline-Driven Strategy} achieves backlog 2 in the fixed-rate cup game for any set of fill rates.

We now give an alternative analysis of the \textbf{Deadline-Driven Strategy} that applies to both the fixed-rate cup game and the bamboo trimming problem---our analysis is significantly simpler than those in past work. 

\begin{theorem}
Suppose that we have $n$ cups $b_1, \ldots, b_n$ with corresponding fill-rates $h_1, \ldots, h_n$ satisfying $\sum_{i = 1}^n h_i \leq 1$. Then the \textbf{Deadline-Driven Strategy} for the cup game will limit the backlog to strictly less than $2$. Furthermore, the equivalent strategy will also limit the backlog to strictly less than $2$ if applied to $n$ bamboo with fill-rates $h_1, \ldots, h_n$.
\end{theorem}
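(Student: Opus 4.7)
The plan is to argue by contradiction using a busy-period analysis adapted from the optimality proof of earliest-deadline-first scheduling. Suppose that, in either the cup-game or the bamboo-trimming setting, some cup $b_i$ first reaches height at least $2$ at the intermediate step of time $t^*$. I would assign each cup $b_j$ a \emph{deadline}---the earliest future step at which $b_j$ would reach intermediate height $2$ if no further cuts of $b_j$ occur---and note that the \textbf{Deadline-Driven Strategy} always cuts the ready cup (i.e., cup with current height at least $1$) of smallest deadline.

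Next I would define $t_0$ to be the largest integer in $\{0, 1, \ldots, t^* - 1\}$ such that at the end of time $t_0$ no cup is simultaneously \emph{ready} ($|b_j|_{t_0} \geq 1$) and has \emph{deadline} $\leq t^*$. The value $t_0 = 0$ satisfies this vacuously, so $t_0$ exists; and because $b_i$ itself is ready and has deadline at most $t^*$ at the end of time $t^* - 1$, we obtain $t_0 \leq t^* - 2$. By the maximality of $t_0$, at every intermediate step $s \in \{t_0 + 1, \ldots, t^*\}$ there is at least one ready cup with deadline $\leq t^*$, so DDS (which picks the ready cup of smallest deadline) necessarily cuts a cup whose current deadline is $\leq t^*$. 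Call such a cut \emph{qualifying}; DDS therefore performs exactly $t^* - t_0$ qualifying cuts over the window $(t_0, t^*]$.

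The crux---and the step I expect to be the main technical obstacle---is a strict per-cup bound: for each cup $b_j$ that is the target of at least one qualifying cut, the number $k_j$ of its qualifying cuts in $(t_0, t^*]$ satisfies $k_j < (t^* - t_0) h_j$. For the bamboo setting, I would exploit spacing: because a cut resets $b_j$ to height $0$, consecutive cuts of $b_j$ must be at least $1/h_j$ apart, and the qualifying condition on each cut $s_i$ with $i \geq 2$ forces $s_{i-1} \leq t^* - 2/h_j$; chaining these with $s_1 \geq t_0 + 1$ yields $k_j \leq (t^* - t_0 - 1) h_j$ when $k_j \geq 2$, while the $k_j = 1$ case follows from the maximality of $t_0$, which forces $|b_j|_{t_0} < 1$ and hence $(t^* - t_0) h_j > 1$. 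For the cup-game setting, I would instead track the quantity $A_j(s) := |b_j|_{s-1} + h_j + (t^* - s) h_j$, which is invariant under the growth step and decreases by exactly $1$ at each cut of $b_j$; since each qualifying cut requires $A_j(\cdot) \geq 2$ and $A_j$ drops by $1$ at every cut of $b_j$, iterating gives $A_j(s_1) \geq k_j + 1$, and combining with $A_j(s_1) \leq A_j(t_0 + 1) < 1 + (t^* - t_0) h_j$ (again from the maximality of $t_0$) gives $k_j < (t^* - t_0) h_j$.

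Finally, summing this strict inequality over all $j$ with $k_j \geq 1$ and applying the hypothesis $\sum_j h_j \leq 1$ yields
\[
t^* - t_0 \;=\; \sum_j k_j \;<\; (t^* - t_0) \sum_j h_j \;\leq\; t^* - t_0,
\]
the desired contradiction.
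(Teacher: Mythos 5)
Your proposal is correct, and it follows the same high-level busy-period / EDF-optimality strategy as the paper, but the execution is genuinely different in its details. Both proofs isolate a window $(t_0, t^*]$ (the paper's $(t_1, t_3]$) during which the algorithm only cuts ``urgent'' cups with deadline at most the overflow time, and both derive a contradiction from $\sum_j h_j \le 1$. The paper defines its boundary $t_1$ via the player's \emph{action} (the last step at which the player idles or cuts a cup with late deadline), counts ``critical'' requests/deadlines arising from the input, and then splits into two cases: a total-volume argument when the critical count is at least $t_3 - t_1$, and a completion-counting argument when it is smaller. You instead define $t_0$ via the system's \emph{state} (no ready urgent cup at the end of $t_0$), observe directly that the algorithm performs exactly $t^* - t_0$ qualifying cuts in the window, and establish the strict per-cup bound $k_j < (t^* - t_0) h_j$---via cut spacing in the bamboo setting and via the projected-height invariant $A_j$ in the cup-game setting---before summing over $j$. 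Your route eliminates the paper's case analysis and is arguably more self-contained, at the cost of needing two slightly different per-cup arguments for the flushing and non-flushing variants; the paper's version treats both settings with a single counting argument and hews more closely to the classical Liu--Layland completion-counting style.
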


\begin{proof}
Following the terminology from Liu and Layland's paper, we say that cup $b_i$ is \defn{requested} at time $t$ if it reaches height $1$ at time $t$. At a time $t_0$, we say that a cup $b_i$ has a \defn{deadline} at time $t$ if $|b_i|_{t_0} \geq 1$ and $|b_i|_{t_0} + (t - t_0)h_i \in [2, 2+h_i)$, that is, the cup has height at least $1$ and it will achieve height $2$ at time $t$ if no water is removed during the interval $[t_0, t)$. We say that cup $b_i$ \defn{overflows} at time $t$ if it achieves height $2$ at time $t$, i.e., the cup is not attended to before its deadline. Whenever a cup with fill in the range $[1, 2)$ is emptied from, we say that the request (when the cup previously reached height $1$) is \defn{completed}.

Suppose for the sake of contradiction that cup $i$ overflows at time $t_3$ and that this is the first ever overflow. We then define $t_1$ to be the last time prior to $t_3$ during which either the player is idle (as all cups have height strictly less than $1$) or the player chooses a cup with deadline after $t_3$. 

We consider the time interval $(t_1, t_3) = [t_1 + 1, t_3 -1]$, during which we know the player is busy removing water from cups with deadlines at or before $t_3$. Furthermore, at time $t_1$, the player was either idle or was busy removing water from a cup with deadline after $t_3$. Thus all cups with request time $\leq t_1$ and deadline $\leq t_3$ had already had their requests completed by time $t_1$. Thus in the interval $(t_1, t_3)$, the player is continuously working on tasks with request time after $t_1$ and deadline at or before $t_3$. 

Now we count the number of requests that occur at or after $t_1+1$ with deadline at or before $t_3$. We call such requests and deadlines \defn{critical}. 

\paragraph{Case 1: There are at least \boldmath$t_3 - t_1$ critical deadlines.} Let $\varepsilon_j = 1 - |b_j|_{t_1}$ for each cup $j$. Since every request at or before $t_1$ with a deadline in $(t_1, t_3]$ is completed by time $t_1$, we know that any cup at time $t_1$ that has fill $1$ or larger must have a deadline after $t_3$ and must not contribute any critical requests/deadlines. Thus each cup $j$ that has at least one critical deadline satisfies $\varepsilon_j > 0$. It follows that if a cup $j$ has $r > 0$ critical deadlines, then the total amount of water placed into cup $j$ during times $(t_1, t_3]$ is at least $r + \epsilon_j > r$. Since at least $t_3 - t_1$ total critical deadlines occur in the interval $(t_1, t_3]$, it follows that more than $t_3 - t_1$ water is placed into cups during those $t_3 - t_1$ steps, a contradiction.

\paragraph{Case 2: There are at most \boldmath$t_3 - t_1 - 1$ critical deadlines.} Since the player is non-idle during the interval $(t_1, t_3)$, and since $t_3$ is the first step during which any cup overflows, the player completes $t_3 - t_1 - 1$ critical requests, one during each of the steps $t_1+1, \ldots, t_3-1$.  Additionally, the final deadline for cup $i$ (which overflows at time $t_3$) is not met and thus corresponds to a critical request that is not completed. Hence there are at least $t_3 - t_1 - 1$ critical requests that get completed during the interval $(t_1, t_3]$ and at least $1$ critical request that does not get completed. This contradicts the assumption that there are $t_3 - t_1 - 1$ or fewer critical requests/deadlines.

\vspace{.3 cm}

Precisely the same analysis that we have used to prove the theorem for the fixed-rate cup game also applies to the bamboo-garden trimming problem. (Indeed, the bamboo-garden trimming problem can be modelled as a version of the fixed-rate cup game in which the player empties cups instead of only removing $1$ unit of water.)
\end{proof}

Thus, the \textbf{Deadline-Driven Strategy} is a simple algorithm which achieves the optimal worst-case backlog of $2$.

\begin{acks}
The author would like to thank Michael A. Bender and William Kuszmaul for their mentorship and advice during this project.
We gratefully acknowledge support from National Science Foundation grants CNS-1938709 and CCF-2106827.
%To Robert, for the bagels and explaining CMYK and color spaces.
\end{acks}

%\newpage

%%
%% The next two lines define the bibliography style to be used, and
%% the bibliography file.
\bibliographystyle{siam}
\balance
\bibliography{all}

\end{document}